\setlist{leftmargin=.8cm}
\definecolor{dark-red}{rgb}{0.4,0.15,0.15}
\definecolor{dark-blue}{rgb}{0.15,0.15,0.4}
\definecolor{medium-blue}{rgb}{0,0,0.5}
\newcommand{\nsp}{\hspace{-0.4pt}}
\newcommand{\ssp}{\hspace{0.4pt}}
\newcommand{\norm}[1]{\lvert #1 \rvert}
\newcommand{\ket}[1]{\lvert\ssp #1\ssp \rangle}
\newcommand{\bra}[1]{\langle\ssp #1\ssp \rvert}
\newcommand{\braket}[2]{\langle\, #1\,\vert\, #2 \,\rangle}
\newcommand{\anticommute}[2]{{\{\ssp #1,\, #2 \ssp\}}}
\newcommand{\DD}{D}
\newcommand{\cc}{c}
\newcommand{\dd}{d}
\newcommand{\hh}{h}
\newcommand{\pp}{p}
\newcommand{\bound}{2n}
\DeclareMathOperator*{\Motimes}{\text{\raisebox{0.25ex}{\scalebox{0.8}{$\bigotimes$}}}}
\DeclareMathOperator{\tr}{Tr}
\newtheorem{theorem}{Theorem} 
\begin{document}

\title{Optimal fermion-to-qubit mapping via ternary trees with applications to reduced quantum states learning}
\author{Zhang Jiang}
\email{qzj@google.com}
\affiliation{Google Research, Venice, CA 90291}
\author{Amir Kalev}
\email{amirk@umd.edu}
\affiliation{Joint Center for Quantum Information and Computer Science,
University of Maryland, College Park, MD 20742-2420, USA}
\author{Wojciech Mruczkiewicz}
\affiliation{Google Research, Venice, CA 90291}
\author{Hartmut Neven}
\affiliation{Google Research, Venice, CA 90291}
\date{2020-05-26}
\pacs{03.65.Ud, 03.67.Lx, 06.20.Dk}

\begin{abstract}
We introduce a fermion-to-qubit mapping defined on ternary trees, where any single Majorana operator on an $n$-mode fermionic system is mapped to a multi-qubit Pauli operator acting nontrivially on $\lceil \log_3(2n+1)\rceil$ qubits.  The mapping has a simple structure and is optimal in the sense that it is impossible to construct Pauli operators in any fermion-to-qubit mapping acting nontrivially on less than $\log_3(\bound)$ qubits on average.  We apply it to the problem of learning $k$-fermion reduced density matrix (RDM), a problem relevant in various quantum simulation applications. We show that one can determine individual elements of all $k$-fermion RDMs in parallel, to precision $\epsilon$, by repeating a single quantum circuit for $\lesssim (2n+1)^k  \epsilon^{-2}$ times.  This result is based on a method we develop here that allows one to determine individual elements of all $k$-qubit RDMs in parallel, to precision $\epsilon$, by repeating a single quantum circuit for $\lesssim 3^k \epsilon^{-2}$ times, independent of the system size.  This improves over existing schemes for determining qubit RDMs.

\end{abstract}

\maketitle

\section{Introduction}
\label{sec:intro}

Quantum simulation opens an alternative approach to solving hard problems ubiquitous in physics, chemistry, and material sciences (e.g., high temperature superconductivity)~\cite{feynman_simulating_1982,lloyd_universal_1996,georgescu_quantum_2014,wecker_solving_2015,babbush_low-depth_2018,jiang_quantum_2018}.  With the rapid advances in quantum computing devices, such as trapped ions~\cite{cirac_quantum_1995,kielpinski_architecture_2002,haffner_quantum_2008} and superconducting qubits~\cite{devoret_superconducting_2004,wendin_quantum_2017}, we are closer today to realize this goal.  An essential step in digital quantum simulation of a fermionic systems is mapping it onto a qubit system.  Having an efficient, simple, fermion-to-qubit mapping is a key ingredient in any quantum simulation protocol, e.g., the variational quantum eigensolver (VQE)~\cite{peruzzo_variational_2014,mcclean_theory_2016}.       

The most common fermion-to-qubit mapping is the Jordan-Wigner transformation (JWT)~\cite{nielsen_fermionic_2005}, which maps single fermionic operators on an $n$-mode fermionic system to qubit operators acting nontrivially on $\mathcal O(n)$ qubits.  In the JWT, the qubit state $\ket{0}$ ($\ket{1}$) represents an occupied (unoccupied) fermionic mode, i.e., its occupation information.  Instead, one can also store the parity information of the fermionic system~\cite{seeley_bravyi-kitaev_2012}, where the $j$-th qubit state is $\ket{0}$ ($\ket{1}$) if the total number of fermions in the first $j$ fermionic modes is even (odd).  By balancing the locality of occupation and parity information, the Bravyi-Kitaev transformation (BKT)~\cite{bravyi_fermionic_2002} maps single fermionic operators to qubit operators acting nontrivially on $\mathcal O(\log_2 n)$ qubits.  The BKT is important because it avoids operations acting on an extensive number of qubits.  It is also quite involved, and many authors have discussed its applications in quantum simulation of fermionic systems~\cite{seeley_bravyi-kitaev_2012, tranter_bravyikitaev_2015}.  Recently~\cite{havlicek_operator_2017}, the BKT was reformulated using the Fenwick trees, a classical data structure that allows for efficient updating elements and calculating prefix sums. 

As the possibilities to study fermionic systems on quantum computing devices materialize~\cite{arute_quantum_2019}, it is well timed to explore other, possibly simpler and more efficient fermion-to-qubit transformations, beyond existing ones.  Here, we present such a transformation.  We construct a simple fermion-to-qubit mapping defined on ternary trees.  It maps any single Majorana operator on an $n$-mode fermionic system to a multi-qubit Pauli operator acting nontrivially on $\lceil \log_3(2n+1)\rceil$ qubits.  For large $n$, it is approximately $\log_2 3 \simeq 1.58$ times lower than $\lceil\log_2 n+1\rceil$ in the BKT~\cite{havlicek_operator_2017}.  We prove that the operator weight in the ternary-tree mapping is optimal.  The ternary-tree mapping was also introduced to derive representations of the Clifford algebras and spin group~\cite{vlasov_clifford_2019}. 

In quantum simulation, it is often desirable to learn the quantum state involving only a fixed numbers of qubits (fermions), i.e., reduced density matrices (RDMs) tomography~\cite{nielsen_quantum_2000, sharma_reduced_2008, fagotti_reduced_2013, rubin_application_2018}.  Learning the $k$-particle RDM not only allows one to estimate the ground state energy of a fermionic system with up to $k$-particle interactions, it also enables one to derive a large number of important properties of the system, such as multipole moments~\cite{gidofalvi_molecular_2007} and derivatives of energy~\cite{overy_unbiased_2014,obrien_calculating_2019}; moreover, it is an indispensable part in error-mitigating techniques by relaxing the single-particle orbitals~\cite{mcclean_hybrid_2017,takeshita_increasing_2020}.  Using the ternary-tree mapping, we show that individual elements of the $k$-particle RDM of an $n$-mode fermionic state stored in a quantum computer can be determined in parallel to precision $\epsilon$ by repeating a single quantum circuit for $\lesssim (2n+1)^k / \epsilon^2$ times.  

The technique to obtain the fermionic RDMs is based on a method we develop here that allows one to determine individual elements of all $k$-qubit RDMs of an $n$-qubit system in parallel, to a precision $\epsilon$, by repeating a single quantum circuit for $\lesssim 3^k / \epsilon^2$ times, independent of the system size $n$.  Our scheme is based on measuring in the Bell basis of a system qubit and an ancilla qubit.  It improves the scaling of a recent result~\cite{cotler_quantum_2020,bonet-monroig_nearly_2019} by a $\log n$ factor.  Moreover, it is informationally complete, implying that one can retrieve the entire quantum state by simply repeating the same quantum circuit many times.   

The paper is organized into two parts.  In Sec.~\ref{sec:mapping}, we introduce the ternary-tree mapping and prove its optimality.  In Sec.~\ref{sec:RDM}, we discuss the Bell-basis measurement scheme for quantum tomography of qubits, and apply it to fermions using the ternary-tree mapping.  In App.~\ref{sec:sic_qubit}, we prove that the Bell-basis measurement scheme for qubits implements a symmetric informationally complete (SIC) POVM.  In App.~\ref{sec:sic_qudits}, we generalize the Bell-basis measurement scheme to qudits and prove its informationally completeness and discuss its relation to SIC POVMs.

\section{Fermion-to-qubit mapping}
\label{sec:mapping}

In the second quantization formulation to quantum mechanics, the $n$-mode fermionic operators can be expanded using the Majorana fermion operators
\begin{align}
  \gamma_{2j} = \cc^\dagger_j + \cc_j\,,\quad \gamma_{2j+1} = i\ssp \big(\cc^\dagger_j - \cc_j \big)\,,
\end{align}
for $j = 1, 2, \ldots, n$, where $\cc_j$ and $\cc^\dagger_j$ are the $j$-th annihilation and creation operators.  The Majorana operators satisfy the simple anticommutation relation $\anticommute{\gamma_u}{\gamma_v} = 2\delta_{uv}$ for $u,\, v = 1,\ldots, 2n$.  In what follows we construct $2n$ independent multi-qubit Pauli operators that are mutually anticommute, which can be used to represent the Majorana operators. 

Our fermion-to-qubit mapping is defined on a ternary tree, where each node (except those in the bottom level of the tree) is associated with a qubit.  We start with the case that the tree is complete and come back to the more general cases later on.  The total number of qubits in a complete ternary tree of height $h$ is
\begin{align}
n = \sum_{\ell=0}^{\hh - 1}\, 3^{\ell} = \frac{3^{\hh} -1}{2}\,.
\end{align}
We label the qubit associated with the root node by $0$, the rest of the qubits are indexed consecutively, as we are going down the tree, see Fig.~\ref{fig:tree}.  The Pauli operators of the $\eta$-th qubit are denoted as $\sigma^{x,\,y,\,z}_{\eta}$.  Each root-to-leaf path on the tree can be uniquely specified by the vector $\pmb \pp = (\pp_0, \ldots, \pp_{\hh-1})$, where $\pp_\ell=0,\,1,\,2$ determines the next node on the path with depth $\ell + 1$ based on the current node with depth $\ell$.  We can write the index of a node of depth $\ell$ on the path $\pmb p$ by
\begin{align}
 \eta(\ssp\ssp\pmb \pp,\ssp\ell) = \frac{3^{\ell} -1}{2} + \sum_{j=0}^{\ell-1}\, 3^{\ell-1-j}\,\pp_j\,,\quad\text{for $\ell \geq 0$}\,,
\end{align}
where $(3^{\ell} -1)/2$ is the number of nodes with depth less than $\ell$.  Next, we associate a multi-Pauli operator, $A_{\pmb \pp}$, with each root-to-leaf path $\pmb p$,
\begin{align}
A_{\pmb \pp} = \prod_{\ell=0}^{\hh-1} \sigma^{\chi(\pp_\ell)}_{\eta(\ssp\ssp\pmb \pp,\ssp\ell)}\,,\quad A_{\pmb \pp}^2=\openone\,,
\end{align}
where $\chi(\pp) = x,\,y,\,z$ for $\pp=0,\,1,\,2$, respectively.   By construction these operators mutually anticommute, i.e., $\anticommute{A_{\pmb p}}{A_{\pmb q}} = 0$ for $\pmb p\neq \pmb q$.  This is because $\pmb p$ and $\pmb q$ have to diverge at some point.  Before that point they involve the same Pauli operators on the same qubits, at the point of divergence $A_{\pmb p}$ and $A_{\pmb q}$ associate different Pauli operator to the same qubit, whereas after that point they involve Pauli operators on different qubits.

\begin{figure}[!tb]
\centering
\includegraphics[width=0.95\columnwidth]{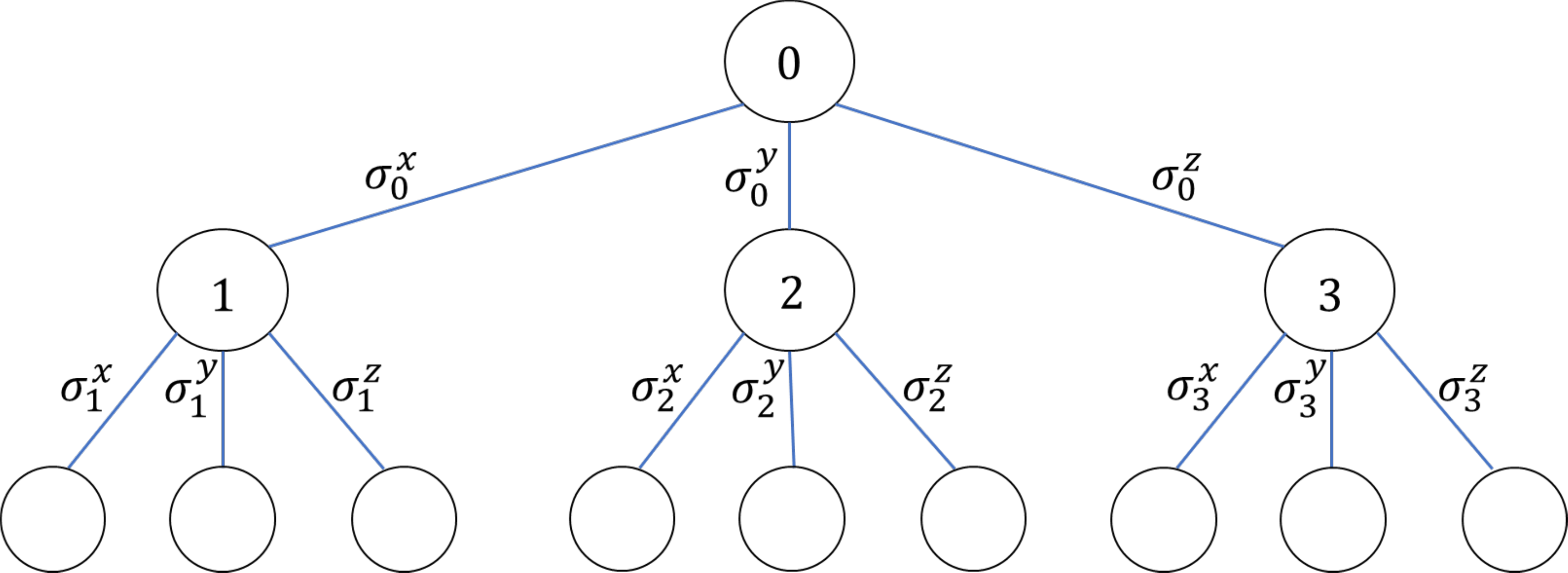}
\caption{{\bf Ternary-tree mapping.} An example of the fermion-to-qubit mapping in a ternary tree of height $2$.  The label of the qubit associated with a node is written inside it.  We introduce a Pauli operator for each root-to-leaf path in the tree.  For example, the left-most path corresponds to the Pauli operator $A_{0,0}=\sigma_0^x\sigma_1^x$.}\label{fig:tree}
\end{figure}

There are $3^\hh = 2n+1$ distinct root-to-leaf paths in the ternary tree, whereas the total number of independent operators $A_{\pmb p}$ is $2n$.  This is because the product of $A_{\pmb p}$ for all paths ${\pmb p}$ is proportional to the identity operator.  Therefore, we can map $2n$ Majorana operators to $2n$ independent Pauli operators that mutually anticommute.  The Pauli weight of $A_{\pmb p}$ equals to the tree height $\hh = \log_3 (2n+1)$.  In the large $n$ limit, it is approximately $\log_2 3 \simeq 1.58$ times lower than $\lceil\log_2 n+1\rceil$ in the BKT~\cite{havlicek_operator_2017}.
This reduction is achieved by balancing all of the three Pauli operators, whereas only Pauli-$x$ and $z$ are balanced in the BKT.  

Consider the case that $2n+1$ is not a power of 3, i.e., $3^{\hh} < 2n+1 < 3^{\hh+1}$ for some nonnegative integer $\hh$.  We can still construct a ternary-tree mapping with Pauli weights less or equal to $\lceil\log_3 (2n+1)\rceil$  by taking the following steps:
\begin{enumerate}
\item{Introduce a complete ternary tree with height $\hh$,}
\item{Choose $n- (3^{\hh}-1)/2$ } leaves in the tree and associate a qubit with each of them,
\item{Branch the corresponding $A_{\pmb \pp}$ to $A_{\pmb \pp}\otimes \sigma_{\pmb \pp}^{x, y, z}$, where $\sigma_{\pmb \pp}^{x, y, z}$ are the Pauli operators on the added qubit,}
\item{Map the resulting Pauli operators, of weights $\hh$ and $\hh+1$, to Majorana operators.}
\end{enumerate}
We now prove that the ternary-tree mapping is optimal. 
\begin{theorem}\label{thm:average_wieght}
The averaged weight of the Pauli operators in any map that transforms an $n$-mode fermionic system onto a qubit system is at least $\log_3 (\bound)$.
\end{theorem}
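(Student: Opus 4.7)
The plan is to reduce the theorem to a Kraft-type inequality on the weights of mutually anticommuting Pauli operators, and then apply Jensen's inequality. Let $A_1,\ldots,A_{2n}$ denote the Pauli images of the $2n$ Majorana operators under an arbitrary fermion-to-qubit mapping, with respective weights $w_1,\ldots,w_{2n}$. The auxiliary claim I would isolate is that for any finite collection of mutually anticommuting multi-qubit Pauli operators,
\begin{align}
\sum_k 3^{-w_k} \;\leq\; 1\,.
\end{align}
Granting this lemma, the theorem follows quickly. Since $f(w)=3^{-w}$ is convex, Jensen's inequality gives $3^{-\bar w}\leq \tfrac{1}{2n}\sum_k 3^{-w_k}\leq \tfrac{1}{2n}$, where $\bar w$ is the average weight. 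Taking $\log_3$ yields $\bar w\geq \log_3(2n)$, which is exactly the claimed bound.

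The core work is thus the lemma, which I would prove by induction on the total number $N$ of qubits. The base case $N=0$ is vacuous. For the inductive step, single out one qubit, say qubit $N$, and partition the operators $A_k$ into four groups $S_I,\,S_X,\,S_Y,\,S_Z$ according to which element of $\{I,X,Y,Z\}$ they carry on that qubit. The decisive observation is that for each $\sigma\in\{X,Y,Z\}$, any two operators in $S_I\cup S_\sigma$ commute on qubit $N$ (they carry either $I$ or $\sigma$ there), so their restrictions to the remaining $N-1$ qubits must mutually anticommute. Letting $w'_k$ denote the weight of $A_k$ restricted to the first $N-1$ qubits, the inductive hypothesis applied to each of the three unions yields
\begin{align}
\sum_{k\in S_I\cup S_\sigma} 3^{-w'_k}\;\leq\; 1\,,\qquad \sigma\in\{X,Y,Z\}\,.
\end{align}
Summing these three inequalities gives $3\sum_{k\in S_I}3^{-w'_k}+\sum_{k\in S_X\cup S_Y\cup S_Z}3^{-w'_k}\leq 3$. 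Using $w_k=w'_k$ for $k\in S_I$ and $w_k=w'_k+1$ otherwise, dividing by three recovers $\sum_k 3^{-w_k}\leq 1$, closing the induction.

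The main obstacle I anticipate is a careful bookkeeping at the boundary of the induction: a restricted operator becomes the identity exactly when the original $A_k$ is a single-qubit Pauli supported on qubit $N$ alone, and one has to verify that such an identity can only appear in a mutually anticommuting restricted family of cardinality one, so that the inductive hypothesis still applies uniformly. Beyond this minor case analysis and checking that the inequality is saturated by the complete ternary tree when $2n+1=3^h$ (confirming tightness), everything else is formal manipulation.
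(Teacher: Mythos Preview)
Your argument is correct and reaches the same destination as the paper---the Kraft-type inequality $\sum_k 3^{-w_k}\le 1$ followed by Jensen---but the route to that inequality is genuinely different. The paper derives it \emph{physically}: it shows that for any state the vector of Majorana expectation values satisfies $\sum_u\langle\gamma_u\rangle^2\le 1$ (via positive semidefiniteness of the Gram matrix $G_{uv}=\langle\gamma_u\gamma_v\rangle-\langle\gamma_u\rangle\langle\gamma_v\rangle$), and then evaluates this on the product state $\xi^{\otimes m}$ with $\xi=\tfrac12\big(\openone+\tfrac{1}{\sqrt3}(\sigma^x+\sigma^y+\sigma^z)\big)$, for which every weight-$w$ Pauli has expectation $3^{-w/2}$, yielding $\sum_u 3^{-w_u}\le 1$ in one stroke. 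Your proof is instead purely combinatorial: the induction on the number of qubits, partitioning by the tensor factor on the last site, directly exploits that there are three nontrivial Pauli letters per qubit. Your approach is more elementary and makes the appearance of the base~$3$ structurally transparent (it is essentially the Kraft inequality for a ternary prefix code, with the ternary tree mapping as the equality case). The paper's approach, on the other hand, is representation-agnostic at the level of $\sum_u\langle\gamma_u\rangle^2\le 1$---that bound holds for \emph{any} anticommuting Hermitian involutions, not just Pauli strings---and only the final specialization to Paulis via the state $\xi$ produces the weight inequality; it also ties the optimality statement to the same state later used in the Bell-basis measurement scheme. The edge case you flag (a restriction collapsing to the identity forcing the restricted family to be a singleton) is indeed the only subtlety, and your handling of it is sound.
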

\begin{proof} 
We introduce the real vector ${\mathbf r}^\top = ( r_1, \ldots, r_{2n})$, where $r_u =  \langle\gamma_u\rangle=\langle\gamma_u^\dagger\rangle$ is the expectation value of the $u$-th Majorana operator. 
We also introduce the $2n\times 2n$ Hermitian matrix $G$ whose elements are given by
\begin{align}
G_{uv} = \big\langle
(\gamma_u - r_u)
(\gamma_v - r_v) \big\rangle=
\langle\gamma_u\gamma_v\rangle - r_u\ssp r_v\,.
\end{align} 
Since $G$ is a Gram matrix it is positive semidefinite.  Consider the expectation value
\begin{align}
{\mathbf r}^\top\! G\, \mathbf r &= \sum_{u,\,v} r_u\ssp r_v\, \langle\gamma_u\gamma_v\rangle - r_u^2\, r_v^2 \nonumber\\
&= {\mathbf r}^\top {\mathbf r} - ({\mathbf r}^\top \mathbf r)^2 \geq 0\,,
\end{align}
where we use $\langle\gamma_u\gamma_v\rangle = -\langle\gamma_v\gamma_u\rangle$ for $u\neq v$ and $\langle\gamma_u\gamma_u\rangle = 1$;  therefore, we have ${\mathbf r}^\top \mathbf r \leq 1$ for any fermionic state.   We then consider an $m$-qubit product state $\xi^{\otimes m}$, where $\xi$ is the single-qubit pure state
\begin{align}\label{eq:equal_distance_state}
 \xi=\frac{1}{2}\bigg(\openone+\frac1{\sqrt3}\big(\sigma^x+\sigma^y+\sigma^z\big)\bigg)\,,
\end{align}
which leads to $\tr(\sigma^{x}\xi)=\tr(\sigma^{y}\xi)=\tr(\sigma^{z}\xi)=1/{\sqrt3}$.   This particular pure product state is symmetric with respect to $\sigma_x$,  $\sigma_y$ and  $\sigma_z$; therefore, it allows us to treat expectation values of Pauli operators with the same weight on an equal footing.   If  $\gamma_u$ is mapped to a Pauli operator with weight $w_u$, we have $\norm{r_u} = 1/\sqrt{3}^{\,w_u}$ under the product state $\xi^{\otimes m}$, and thus
\begin{align}\label{eq:inequality}
   2n\, 3^{-\overline w} \leq  \sum_{u=1}^{2n} 3^{-w_u} =  {\mathbf r}^\top \mathbf r\leq 1\,,
\end{align}
where $\overline w = \frac{1}{2n}\,\sum_{u=1}^{2n} w_u$ and we use the fact that $y=3^{-x}$ is a convex function in the leftmost inequality.   As a consequence, we have
\begin{align}
   \overline w \geq \log_3 (2n)\,.
\end{align}
When the Pauli weight $w_u = w$ is a constant, we have
\begin{align}
   w \geq \lceil\log_3 (2n)\rceil\geq \log_3 (2n+1)\,.
\end{align}
Therefore, the ternary-tree mapping that we have constructed is optimal.
\end{proof}

\section{Reduced density matrices tomography}
\label{sec:RDM}

\subsection{Qubits}

In this section, we first introduce a method of reconstructing RDMs specifically for qubit systems, and then use this method together with the ternary-tree mapping to reconstruct RDMs for fermionic systems.

Given an $n$-qubit quantum state $\rho$, the $k$-qubit reduced density matrix ($k$-RDM) may be written as
\begin{align}\label{eq:krdm_qubit}
   \tr_{\substack{\neq j_1,\ldots, j_k}}\big( \rho\big) = \frac{1}{2^k}\!\sum_{\substack{\alpha_1,\ldots,\alpha_k\, =\, 0,\, x,\, y,\, z}}\!\! \rho^{\alpha_1,\ldots,\alpha_k}_{j_1,\ldots, j_k}\Motimes_{i=1}^k \sigma^{\alpha_i}_{j_i}\,,
\end{align}
 where $\sigma_j^{\alpha}$ is the Pauli operator on the $j$-th qubit, $\sigma^{0} = \openone$ and $\sigma^{x,\,y,\,z}$ are the Pauli-$x$, -$y$, and -$z$ operators, respectively.  The matrix elements of the $k$-RDMs in Eq.~\eqref{eq:krdm_qubit} are defined as 
\begin{align}\label{eq:kcorr}
\rho^{\alpha_1,\ldots,\alpha_k}_{j_1,\ldots, j_k} = \tr\bigg(\rho\,\Motimes_{i=1}^k \sigma_{j_i}^{\alpha_i}\bigg)\,.
\end{align}
Assuming all $(k-1)$-RDMs are known, measuring the $\binom{n}{k}\ssp 3^k$ different observables, $\Motimes_{i=1}^k \sigma_{j_i}^{\alpha_i}$ ($\alpha_i\neq 0$), provides us with the required information to reconstruct all $k$-RDMs.  Under the assumption that only single-qubit operations are allowed, Cotler and Wilczek~\cite{cotler_quantum_2020} showed that elements in all $k$-RDMs can be sampled by implementing $e^{\mathcal O(k)}\log(n)$ different quantum circuits.  

Changing quantum circuits is typically quite slow on current programmable quantum devices, e.g., those based on FPGAs, while repeating a single circuit may be done much faster.  To circumvent the problem of programming various circuits, we propose a scheme that allows one to estimate all $k$-qubit RDMs at once.  Our approach is based on measuring a system qubit and an ancilla qubit in the Bell basis
\begin{align}
 &\ket{\Phi^\pm} = \frac{1}{\sqrt{2}} \Big(\ket{0} \otimes \ket{0} \pm \ket{1}\otimes \ket{1}\Big)\,,\\[3pt]
&\ket{\Psi^\pm} = \frac{1}{\sqrt{2}} \Big(\ket{0} \otimes \ket{1} \pm \ket{1}\otimes \ket{0}\Big)\,.
\end{align}
The Bell basis is a common eigenbasis of the commuting operators $\sigma^x\otimes\sigma^x$, $\sigma^y\otimes\sigma^y$, and $\sigma^z\otimes\sigma^z$, and their eigenvalues are listed in Tab.~\ref{fig:bell_basis}.  
\begin{table}[htb]
    \centering
    {\renewcommand{\arraystretch}{1.5}
\begin{tabular}{|c||c|c|c| }
 \hline
 \;Outcome\;& \;$\sigma^x\otimes \sigma^x$\; &\;$\sigma^y\otimes \sigma^y$\; & \;$\sigma^z\otimes \sigma^z$\;\\
 \hline
 $\Phi^+$ & 1 & -1 & 1\\
 $\Phi^-$ &-1 &  1 & 1\\
 $\Psi^+$ & 1 &  1 & -1\\
 $\Psi^-$ &-1 & -1 & -1\\
 \hline
\end{tabular}}
    \caption{Eigenvalues of the Pauli tensor products in the Bell basis. }
    \label{fig:bell_basis}
\end{table}
Therefore, the expectation values of these three operators can be measured simultaneously.  The Bell measurements can be done in parallel for all pairs of qubits using Hardmard and CNOT gates.  The quantum circuit to implement our scheme, for one system qubit and one ancilla qubit, is plotted in Fig.~\ref{fig:ibmcircuit}.  

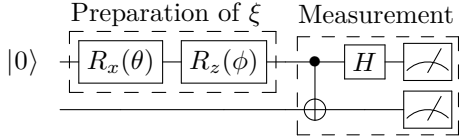
\begin{figure}[hbt]\label{fig:mu_circuit}
\hspace{1em}\Qcircuit @C=0.7em @R=0.2em @!R { 
& &  &  \hspace{1.2cm}\mbox{Preparation of $\xi$} & & & & \hspace{0.3cm}\mbox{Measurement} \\
\ket{0} & & &\gate{R_x(\theta)}&\gate{R_z(\phi)}&\qw &\ctrl{1}&\gate{H}&\meter \\
 & & & \qw&\qw&\qw&\targ&\qw&\meter \gategroup{2}{4}{2}{5}{.7em}{--}\gategroup{2}{7}{3}{9}{.5em}{--}
}
\caption{A circuit description of the proposed Bell-basis measurement on a single system qubit, represented by the lower wire. The top wire represents an ancilla qubit, where the rotation angles are $\theta=\arccos(\frac1{\sqrt{3}})$ and $\phi=\frac{3\pi}{4}$.}
\label{fig:ibmcircuit}
\end{figure}

If the Pauli-$x$, -$y$, and -$z$ operators are sampled at the same rate, a natural choice of the ancilla state is the pure state defined in Eq.~\eqref{eq:equal_distance_state}.  It can be obtained by rotating $\ket{0}$ about the $x$ axis by an angle $\theta=\arccos(1/\sqrt{3}\,)$ followed by a rotation around the $z$ axis by an angle $\phi=3\pi/4$, see Fig.~\ref{fig:mu_circuit}.  Measuring the system and ancilla qubits in the Bell basis yields
\begin{align}
\tr\bigg(\rho\otimes\xi^{\otimes n} \prod_{i=1}^k \sigma^{\alpha_i}_{j_i}\otimes \sigma^{\alpha_i}_{j'_i}\bigg)=\frac1{\sqrt{3}^{\,k}}\;\rho^{\alpha_1,\ldots,\alpha_k}_{j_1,\ldots, j_k}\,,
\end{align}
where we use $\tr(\xi\ssp \sigma^x) = \tr( \xi\ssp \sigma^y) =\tr( \xi\ssp \sigma^z) = 1/\sqrt 3$.
Due to the factor $1/\sqrt{3}^{\, k}$, we must run the experiment $3^k /\epsilon^2$ times to obtain the standard-deviation error $\epsilon$.  

In summary, our scheme measures all $k$-qubit RMD elements in parallel, with error that scales as $\epsilon$, by running a single quantum circuit for $3^k / \epsilon^2$ times:
\begin{enumerate}
\item{To each system qubit (labeled by $j$) attach an ancillary qubit (labelled by $j'$) in a known state $\xi$, so that the total system-ancilla state is $\rho\otimes\xi^{\otimes n}$.}
\item{Measure each pair of qubits $(j,j')$ in the common eigenbasis of $\sigma^x\otimes\sigma^x$, $\sigma^y\otimes\sigma^y$, and $\sigma^z\otimes\sigma^z$, i.e., the Bell basis.}
\end{enumerate}
While the scheme in~\cite{cotler_quantum_2020} allows one to specify all RDMs up to some order $k$, our scheme allows for specifying the entire $n$-qubit state, i.e., all RDMs, for all $k\in[1,n]$ ~\footnote{For finite rounds of measurements, though we have information about all RDMs, the ones involving fewer qubits (smaller $k$) would be estimated more accurately than those involving more qubits (lager $k$).}.  The reason for this inherent difference is rooted in that the measurement we perform  on each qubit is {\it informationally complete}, that is it allows us to estimate the expectation values of $\sigma_x$, $\sigma_y$, and $\sigma_z$, from one POVM.  Since this measurement is done in parallel for all qubits, we can use the data to estimate the entire $n$-qubit state.  In Apps.~\ref{sec:sic_qubit}, we show it implements a symmetric informationally-complete measurement (SIC POVM) on a qubit, a.k.a., the tetrahedron measurement~\cite{Englert04}. Finally, we note that recent work~\cite{hamamura_efficient_2019} has also suggested to use Bell-measurements in the context of RDMs reconstruction. However, there the use of Bell-measurements was done on pairs of system qubits.  In App.~\ref{sec:sic_qudits}, we generalize the Bell-basis measurement method to $D$-dimensional system, and show that similar to the qubit case it is informationally-complete and implements SIC POVMs in certain cases.

\subsection{Fermions}

Now, we apply the Bell-basis measurement method for RDM reconstruction together with the ternary-tree mapping for reconstructing $k$-fermion RDMs.  Generally, it requires $\mathcal O(n^{2k})$ parameters to determine the $k$-RDMs of an $n$-mode fermionic system.  To make matters worse, fermions obey anticommutation relations which prevents one from finding large groups of commuting operators that can be measured simultaneously.  In second quantization, the elements in a fermionic $k$-RDM of $n$ modes can be expressed as expectation values involving $2k$ Majorana operators.  For instance, the fermionic 2-RDM may be written as
\begin{align}
 \rho_{stuv} =  \big\langle \gamma_s\gamma_t\gamma_u\gamma_v\big\rangle  \,, 
\end{align}
where the $\gamma$'s are the Majorana fermion operators defined above.  Encoded with the ternary-tree mapping, the Pauli weights of elements in fermionic $k$-RDMs are at most $2k\log_3 (2n+1)$.  With the Bell-basis measurement strategy, the matrix elements are attenuated by a factor bounded by 
\begin{align}\label{eq:fermi_rdm}
    {\sqrt 3}^{\: 2k\log_3 (2n+1)} = (2n+1)^k\,.
\end{align}
Therefore, all elements in a fermionic $k$-RDMs can be measured to precision $\epsilon$ by repeating the same circuit for $\lesssim(2n+1)^k / \epsilon^2$ times.

Recently, Bonet-Monroig, Babbush, and O'Brien~\cite{bonet-monroig_nearly_2019} showed that if one is able to implement a linear depth circuit on a linear array prior to measurement, then one can directly measure the fermionic $2$-RDM using $\mathcal O(n^2)$ circuits.  This result, albeit different from ours, suggests the same scaling of measurement repetitions to sample elements in fermionic RDMs on a quantum computer.  If only the ground state energy is concerned, more efficient methods based on linear combination of unitary operators~\cite{izmaylov_unitary_2020, zhao_measurement_2019} and a low rank factorization of the two-electron integral tensor~\cite{huggins_efficient_2019} are recently proposed.

\section{Conclusion}
\label{sec:con}

The fermionic anticommutation relation has many important implications, including the Pauli exclusion principle and more generally the $n$-representability problem~\cite{mazziotti_reduced-density-matrix_2009}.  As a direct consequence of the anticommutation relation, we prove a lower bound on the weights of Pauli operators to represent single fermionic operators of an $n$-mode fermionic system.  We also constructed a fermion-to-qubit mapping based on ternary trees that saturates this bound, where the numbers of Pauli-$x$, -$y$, and -$z$ operators are balanced.  We expect the ternary-tree mapping to be a useful tool to studying fermionic systems, such as Hubbard-like systems, on noisy and fault-tolerant quantum computing devices.


Related to the fermion-to-qubit mapping, we discuss the measurement of fermionic RDMs which is likely to be a bottleneck in quantum simulation of fermions; the noncommuting feature of fermionic operators forbids one to measure them simultaneously.  The fermionic anticommutation relation again puts a lower bound on the measurement repetitions required to determine the RDMs.  Here, we have shown that individual elements of the $k$-RDM of a fermionic state---stored in a quantum computer with the ternary-tree mapping---can be determined in parallel to precision $\epsilon$ by repeating a single quantum circuit for $\lesssim (2n+1)^k / \epsilon^2$ times.  In comparison, using the Bravyi-Kitaev transformation leads to a worse exponent in the number of repetitions, i.e., $\lesssim (n+1)^{k \log_2\nsp 3} / \epsilon^2$.

These measurement results for fermions are based on a Bell-basis method we develop here that allows for determining individual elements of all $k$-qubit RDMs in parallel, to precision $\epsilon$, by repeating a single quantum circuit for $\lesssim 3^k \epsilon^{-2}$ times, independent of the system size.  This is especially suited for quantum computers with slow circuit updating rates, e.g., those based on FPGAs.  Our scheme also saves a $\log n$ factor in circuit repetitions compared to the results by Cotler and Wilczek~\cite{cotler_quantum_2020}, at the cost of introducing extra ancilla qubits.  Another desired property of our scheme is that it is informationally complete, allowing for reconstruction of the entire quantum state as opposed to the RDMs up to some order $k$.  We have also generalized the Bell-basis measurement scheme to qudits using the Heisenberg-Weyl operators.

Future work includes designing fermion-to-qubit mappings on ternary trees with varying depths of root-to-leaf paths.  This is useful to problems where some of the Majorana operators are more important and need to be sampled more often.  It is desirable to map these operators to Pauli operators with less weights, corresponding to low-depth paths in the tree.  Another interesting line of research is to combine the ternary-tree mapping with quantum error-correcting codes by introducing redundant qubits~\cite{steudtner_quantum_2019, setia_bravyi-kitaev_2018, jiang_majorana_2019}.

\begin{acknowledgments}
After completion of this work, we became aware of the work by Vlasov~\cite{vlasov_clifford_2019}, and we thank the author for bringing his work to our attention.  We would like to thank Sergio Boixo, Yu-An Chen, Sam McArdle, Jarrod McClean, Tom O'Brien, Nick Rubin, Kevin Sung, and Vadim Smelyanskiy for fruitful discussions.  Special thanks to Ryan Babbush for describing the problem and the beautiful results he and collaborators found during a surfing trip.  AK acknowledges the support from the US Department of Defense and the support from the AFOSR MURI project ``Scalable Certification of Quantum Computing Devices and Networks.''
\end{acknowledgments}


%

\onecolumngrid
\appendix

\section{The informational completeness of the Bell-basis measurement and its implementation}
\label{sec:sic_qubit} 

In this Appendix, we prove that the Bell-basis measurement scheme introduced in Sec.~\ref{sec:RDM} implements a symmetric informationally complete (SIC) POVM for qubits.

\begin{theorem}\label{thm:qubit_info_comp}
If $\tr(\sigma^\alpha\xi)\neq0$ for $\alpha=x,y,z$, our procedure is informationally complete for $\rho$ in the limit of infinite number of measurement repetitions, i.e., $\rho$ can be constructed from the measurement outcomes.
\end{theorem}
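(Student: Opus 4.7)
The plan is to show that the joint statistics of Bell-basis outcomes across all $n$ system-ancilla pairs uniquely determine every Pauli expectation value of $\rho$, and hence $\rho$ itself.

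First, I would expand each Bell-basis projector in the Pauli basis of the two-qubit space. Using Tab.~\ref{fig:bell_basis} one gets
\begin{align}
   \proj{B} = \tfrac{1}{4}\Big(\openone\otimes\openone + \epsilon_B^x\, \sigma^x\otimes\sigma^x + \epsilon_B^y\, \sigma^y\otimes\sigma^y + \epsilon_B^z\, \sigma^z\otimes\sigma^z\Big),
\end{align}
for $B\in\{\Phi^\pm,\Psi^\pm\}$, where $\epsilon_B^\alpha=\pm1$ is the listed eigenvalue. From this single-pair decomposition one observes that the $4\times 4$ table of eigenvalues is invertible (apart from the trivial identity row), so the three numbers $\langle\sigma^\alpha\otimes\sigma^\alpha\rangle$, $\alpha=x,y,z$, are linear combinations of the four Bell outcome probabilities and are therefore estimable from samples.

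Second, I would lift this to the $n$-pair situation. Because the measurement factorizes as a tensor product of Bell POVMs across pairs, the same linear inversion, applied jointly, recovers every correlator
\begin{align}
   C^{\alpha_1,\dots,\alpha_k}_{j_1,\dots,j_k} \;=\; \tr\!\Bigg((\rho\otimes\xi^{\otimes n})\, \Motimes_{i=1}^k \sigma^{\alpha_i}_{j_i}\otimes\sigma^{\alpha_i}_{j'_i}\Bigg),
\end{align}
for any subset of pairs and any choice $\alpha_i\in\{x,y,z\}$. Since $\rho$ and $\xi^{\otimes n}$ are in a product state, and the system and ancilla Pauli factors act on disjoint registers, this correlator factorizes as
\begin{align}
   C^{\alpha_1,\dots,\alpha_k}_{j_1,\dots,j_k} \;=\; \rho^{\alpha_1,\dots,\alpha_k}_{j_1,\dots,j_k}\,\prod_{i=1}^{k}\tr(\xi\,\sigma^{\alpha_i}),
\end{align}
where $\rho^{\alpha_1,\dots,\alpha_k}_{j_1,\dots,j_k}$ is the system Pauli expectation defined in Eq.~\eqref{eq:kcorr}. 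Under the hypothesis $\tr(\sigma^\alpha\xi)\neq 0$ for $\alpha=x,y,z$, the product on the right is nonzero, so one can divide to solve for $\rho^{\alpha_1,\dots,\alpha_k}_{j_1,\dots,j_k}$.

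Finally, I would observe that correlators containing identity factors (i.e., entries of lower-weight RDMs) are obtained by marginalizing the Bell outcomes of the unused pairs, so varying $k$ from $0$ to $n$ and all index choices yields every coefficient in the Pauli expansion of $\rho$. Since the Pauli operators form a basis of Hermitian operators on $n$ qubits, these expectation values determine $\rho$ uniquely, proving informational completeness. The only non-routine step is the factorization in the second paragraph; everything else reduces to inverting a fixed invertible linear map and invoking standard state-tomography completeness of the Pauli basis.
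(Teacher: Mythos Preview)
Your proposal is correct and follows essentially the same approach as the paper: use that the Bell basis simultaneously diagonalizes $\sigma^\alpha\otimes\sigma^\alpha$ to extract the joint correlators $\tr\big((\rho\otimes\xi^{\otimes n})\bigotimes_i\sigma^{\alpha_i}_{j_i}\otimes\sigma^{\alpha_i}_{j'_i}\big)$, factorize them as $\rho^{\alpha_1,\dots,\alpha_k}_{j_1,\dots,j_k}\prod_i\tr(\sigma^{\alpha_i}\xi)$, and divide by the nonvanishing ancilla factors. Your write-up is somewhat more explicit than the paper's (spelling out the invertibility of the eigenvalue table and the Pauli-basis completeness step), but the argument is the same.
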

\begin{proof}
The Bell basis
\begin{align}
 &\ket{\Phi^\pm} = \frac{1}{\sqrt{2}} \Big(\ket{0} \otimes \ket{0} \pm \ket{1}\otimes \ket{1}\Big)\,,\\[3pt]
&\ket{\Psi^\pm} = \frac{1}{\sqrt{2}} \Big(\ket{0} \otimes \ket{1} \pm \ket{1}\otimes \ket{0}\Big)\,.
\end{align}
is a common eigenbasis of $\sigma^x\otimes\sigma^x$, $\sigma^y\otimes\sigma^y$, and $\sigma^z\otimes\sigma^z$.  Therefore, in the limit of infinite number of measurement repetitions, for any $k\in[1,n]$,  $\alpha_i=x,y,z$ and $j_i\in[1,n]$ ($j_i\neq j_{i'}$ for $i\neq i'$, $i, i'=1,\ldots,k$), using the procedure described in the main text we can calculate the $2k$-body correlation function $\tr(\sigma^{\alpha_1}_{j_1}\otimes \sigma^{\alpha_1}_{j'_1} \cdots\sigma^{\alpha_k}_{j_k}\otimes \sigma^{\alpha_k}_{j'_k}\rho\otimes\xi^{\otimes n})$ which equals to $\tr(\sigma^{\alpha_1}_{j_1} \cdots\sigma^{\alpha_k}_{j_k}\rho)\prod_{i=1}^k\tr(\sigma^{\alpha_i}_{j'_i}\xi)$. Since we assumed $\xi$ is known and $\tr(\sigma^\alpha\xi)\neq0$  for $\alpha=x,y,z$, we can write
\begin{align}
\rho^{\alpha_1,\ldots,\alpha_k}_{j_1,\ldots, j_k} &=\tr(\sigma^{\alpha_1}_{j_1} \cdots\sigma^{\alpha_k}_{j_k}\rho)\\&
=\frac{\tr(\sigma^{\alpha_1}_{j_1}\otimes \sigma^{\alpha_1}_{j'_1} \cdots\sigma^{\alpha_k}_{j_k}\otimes \sigma^{\alpha_k}_{j'_k}\rho\otimes\xi^{\otimes n})}{\prod_{i=1}^k\tr(\sigma^{\alpha_i}_{j'_i}\xi)}\,. 
\end{align}
\end{proof}

The SIC POVM on a qubit is a collection of four outcomes (POVM elements) $E_i=\frac1{2}\ket{\psi_i}\bra{\psi_i}$, $i=1,\ldots,4$, such that 
\begin{align}
\vert\braket{\psi_i}{\psi_j}\vert^2=\frac{2\delta_{i,j}+1}{3}\,. 
\end{align}
We now show that, following Neumark's theorem it can be implemented  by attaching an ancilla qubit in a state $\xi=\frac1{2}(\openone+\frac1{\sqrt3}(\sigma_x+\sigma_y+\sigma_z))$ and measuring the two qubits in the Bell-basis.

In the Bell-basis measurement, the probability to obtain the outcome that corresponds to $\Phi^{+}$ is:
\begin{align}
\bra{\Phi^{+}}\rho\otimes\xi\ket{\Phi^{+}}&=\frac{1}{2}\Big(\bra{0}\rho\ket{0}\bra{0}\xi\ket{0}+\bra{0}\rho\ket{1}\bra{0}\xi\ket{1}+\bra{1}\rho\ket{0}\bra{1}\xi\ket{0}+\bra{1}\rho\ket{1}\bra{1}\xi\ket{1}\Big)\,. 
\end{align}
Therefore the POVM element on the system qubit which corresponds to the $\Phi^{+}$ outcome is
\begin{align}\label{eq:povm1}
E_{\Phi^{+}}&=\frac{1}{2}\Big(\bra{0}\xi\ket{0}\ket{0}\bra{0}+\bra{0}\xi\ket{1}\ket{1}\bra{0}+\bra{0}\xi\ket{1}\ket{1}\bra{0}+\bra{1}\xi\ket{1}\ket{1}\bra{1}\Big)=\frac{\xi}{2}\,. 
\end{align}
Similarly we find,
\begin{align}
E_{\Phi^{-}}&=\frac{1}{2}\sigma^z\xi\ssp\sigma^z\,,\label{eq:povm2}\\ 
E_{\Psi^{+}}&=\frac{1}{2}\sigma^x\xi\ssp\sigma^x\,,\label{eq:povm3}\\ 
E_{\Psi^{-}}&=\frac{1}{2}\sigma^y\xi\ssp\sigma^y\label{eq:povm4}\,.
\end{align}
Note that $\xi$ is a pure state and therefore so are $\sigma^x\xi\sigma^x$, $\sigma^y\xi\sigma^y$, and $\sigma^z\xi\sigma^z$. To see that  $\{E_{\Phi^{+}},E_{\Phi^{-}},E_{\Psi^{+}},E_{\Psi^{-}}\}$ forms a SIC POVM on a qubit, one can verify explicitly that for $\alpha\neq\beta=0,x,y,z$ $\tr(\sigma^\alpha\xi\sigma^\alpha \sigma^\beta\xi\sigma^\beta)=\frac{1}{3}$ where $\sigma^0=\openone$. 

The implementation of the tetrahedron measurement on a qubit using the Bell-basis measurement points at a deep connection between dense coding and SIC POVM of a qubit. More explicitly, the protocol of dense coding is based on the relations
\begin{align}
\ket{\Phi^{+}}&=\openone\otimes\openone\ket{\Phi^{+}}\,,\\
\ket{\Phi^{-}}&=\openone\otimes\sigma^z\ket{\Phi^{+}}\,,\\ 
\ket{\Psi^{+}}&=\openone\otimes\sigma^x\ket{\Phi^{+}}\,,\\ 
\ket{\Psi^{-}}&=\openone\otimes\sigma^x\sigma^z\ket{\Phi^{+}}=\openone\otimes\sigma^y\ket{\Phi^{+}}\,.
\end{align}
These relations allow us to write Eqs.~\eqref{eq:povm1}-\eqref{eq:povm4}, which has the structure of  Heisenberg-Weyl (HW) group covariant SIC POVMs. Upon fixing $\xi$ to be a fiducial state of the  HW group SIC POVM, as we do above, we recover the tetrahedron measurement.

\section{Generalization of RDM measurement scheme to qudits}
\label{sec:sic_qudits}

In this appendix, we generalize the scheme introduced in Sec.~\ref{sec:RDM} to $n$-qudit systems, i.e., collections of $\DD$-level spin systems.  We prove that our scheme is informationally completely and can be implemented using the generalized Bell basis.  Consider the Heisenberg-Weyl (HW) shift and phase (clock) operators, 
\begin{align}
 X\ket{\dd} = \ket{\dd\oplus 1}\,,\quad Z\ket{\dd} = e^{\frac{2\pi i \dd}{\DD}}\ket{\dd} \,,
\end{align}
where $\dd =0,\ldots, \DD-1$ and $\oplus$ is addition modulo $\DD$ ($\ominus$ for modular subtraction).  They obey the HW commutation relation $ XZ =  e^{\frac{2\pi i}{\DD}}ZX$.
Similarly to the Pauli operators on a qubit, the HW operators generate a complete operator basis acting on the $\DD$-dimensional Hilbert space. In particular, a density matrix $\rho$ of a qubit can be expressed as 
\begin{align}
 \rho = \frac1{\DD}\bigg(\openone+\sum_{f\land g\neq0}^{\DD-1} \rho^{f\ssp g} X^f Z^g\bigg)\,,
\end{align} 
where $ \rho^{f\ssp g}=\tr(Z^{-g} X^{-f} \rho)$ are the expansion coefficients. Similarly, given an $n$-qudit density matrix $\rho$, a $k$-qudit RDM can be written as
\begin{align}
  &\tr_{\substack{\neq j_1,\ldots, j_k}}\big( \rho\big) \nonumber\\&= \frac{1}{\DD^k}\sum_{f_1,g_1\ldots,f_k,g_k = 0}^{\DD-1} \rho^{f_1 g_1,\ldots,f_k g_k }_{j_1,\ldots, j_k}\Motimes_{i=1}^k X^{f_i}_{j_i}Z^{g_i}_{j_i}\,,
\end{align}
 where $X^{f_i}_{j_i}Z^{g_i}_{j_i}$ are the HW operators on the $j_i$-th qudit, and 
\begin{align}
\rho^{f_1 g_1,\ldots,f_k g_k}_{j_1,\ldots, j_k} = \tr\bigg(\rho\,\Motimes_{i=1}^k X^{f_i}_{j_i}Z^{g_i}_{j_i}\bigg)\,,
\end{align}
By measuring the above correlation functions for all $f_1,h_1\ldots,f_k,h_k$ and all $j_1,\ldots, j_k$  we can determine all $k$-qudit RDMs. 

While the HW operators on a qudit are not commute or anticommute as the Pauli operators on a qubit, nevertheless, we can measure all the required correlation functions with a single quantum circuit, as we describe below. Consider the generalization of the Bell-basis to qudits, 
\begin{align}\label{eq:generalized_bell_basis}
\ket{\Phi_{h\ell}}& = \openone\otimes X^h Z^\ell\,\ket{\Phi_{00}} = Z^\ell X^{-h} \otimes\openone\, \ket{\Phi_{00}}\,,
\end{align}
where
\begin{align}
\ket{\Phi_{00}} = \frac{1}{\sqrt \DD}\: \sum_{\dd=0}^{\DD-1}\: \ket{\dd}\otimes \ket{\dd}\,.
\end{align}
Following the HW commutation relation we obtain,
\begin{align}
\braket{\Phi_{h\ell}}{\Phi_{h'\ell'}} &= \bra{\Phi_{00}}\openone\otimes ( X^{h}Z^{\ell})^\dagger X^{h'} Z^{\ell'}\ket{\Phi_{00}}\\
&=\delta_{\ell,\ell'}\, \delta_{h,h'}\,,
\end{align}
where the $\delta_{\ell,\ell'}=1$ if $\ell\ominus\ell'=0$ and $\delta_{\ell,\ell'}=0$ otherwise (and similarly for  $\delta_{h,h'}$).
Therefore, the set $\{\ket{\Phi_{h\ell}}: h, \ell =0,\ldots, \DD-1\}$ form an orthonormal basis for two qudits.  It is the eigenbasis of the operators $X^f Z^g\otimes X^{f} Z^{-g}$ for $f, g =0,\ldots, \DD-1$,
\begin{align}
  X^f Z^g\otimes X^{f} Z^{-g} \,\ket{\Phi_{h\ell}} = e^{2\pi i\,(gh - f\ell)/\DD}\, \ket{\Phi_{h\ell}}\,.
\end{align}
Hence, measuring these mutually-commuting operators in the generalized Bell basis~\eqref{eq:generalized_bell_basis} reveals their values simultaneously.  

Similar to the qubit case, one can estimate elements in all $k$-qudit RDMs with error $\epsilon$ by repeating the following steps for $(\DD+1)^k / \epsilon^2$ times:
\begin{enumerate}
\item{To each system qudit (labeled by $j$) attach an ancillary qudit (labelled by $j'$) in a known state $\xi$, so that the total system-ancilla state is $\rho\otimes\xi^{\otimes n}$.}
\item{Measure each pair of qudits $(j,j')$ in the common eigenbasis of $ X^f Z^g\otimes X^{f} Z^{-g}$, i.e., the generalized Bell basis.}
\end{enumerate}
By choosing a qudit ancilla state $\xi$ such that
\begin{align}\label{eq:ancilla_state_qudit}
\tr\Big(X^{f_i}_{j'_i} Z^{-g_i}_{j'_i}\xi\Big)=\frac1{\sqrt{\DD+1}}\,,
\end{align}
we obtain  
\begin{align}
&\rho^{f_1 g_1,\ldots,f_k g_k}_{j_1,\ldots, j_k} =\frac{\tr\Big(\rho\otimes\xi^{\otimes n} \prod_{i=1}^k X^{f_i}_{j_i} Z^{g_i}_{j_i}\otimes  X^{f_i}_{j'_i} Z^{-g_i}_{j'_i} \Big)}{\sqrt{\DD+1}^{\,k}}\,. 
\end{align}
Such states $\xi$ are known in various dimensions and believed to exist in every dimension. For example, analytic solutions are known for dimensions 2-24, 28, 30, 31, 35, 37, 39, 43, 48, 124, and numerical solutions are known in all dimensions up to and including 151~\cite{fuchs_sic_2017}.  The last equation implies that to obtain a fixed standard-deviation error $\epsilon$ in the measured quantities we must run the experiment $(\DD+1)^k\ssp / \epsilon^2$ times, independently of $n$.  In Apps.~\ref{sec:sic_qubit}, we show that this measurement scheme implements a SIC POVM on a qudit. 

For example, for a qutrit (three-dimensional system) the pure state $\psi=\frac1{\sqrt2}(0,1,-1)$ is known to be a fiducial state of the HW group covariant SIC POVM~\cite{Dang2013}.  For the proposed protocol to success we only require that $\big\lvert\tr(X^{f_i}_{j'_i} Z^{-g_i}_{j'_i}\xi)\big\rvert\geq \delta$.  The parameter $\delta$ determines the number of total measurement rounds in the protocol.  Therefore, for dimensions where the HW group covariant SIC POVMs are unknown or if we are only required to estimate specific correlation functions we can replace the condition~\eqref{eq:ancilla_state_qudit} with a less stringent one.

\begin{theorem}\label{thm:qudit_info_comp}
If $\tr(X^{f} Z^{-g}\xi)\neq0$ for $f,g=0,\ldots,\DD-1$, then in the limit of infinite number measurement repetitions the above procedure is informationally complete for $\rho$.
\end{theorem}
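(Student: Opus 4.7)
The plan is to adapt, nearly verbatim, the argument used in Theorem~\ref{thm:qubit_info_comp}, replacing the Pauli-string correlators with Heisenberg--Weyl correlators and using the generalized Bell basis in place of the qubit Bell basis. Two ingredients carry the proof: the diagonalization of the commuting family $\{X^fZ^g\otimes X^fZ^{-g}\}$ in the generalized Bell basis, and the product structure of the system--ancilla state $\rho\otimes\xi^{\otimes n}$.

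First, I would invoke the eigenrelation $X^fZ^g\otimes X^fZ^{-g}\,\ket{\Phi_{h\ell}}=e^{2\pi i(gh-f\ell)/\DD}\,\ket{\Phi_{h\ell}}$ already established above. It shows that measuring each system--ancilla pair $(j,j')$ in the generalized Bell basis returns, in a single experimental shot, joint information that in the infinite-repetition limit pins down every expectation value of the form
\begin{equation*}
\tr\Bigl(\rho\otimes\xi^{\otimes n}\,\bigotimes_{i=1}^k X^{f_i}_{j_i}Z^{g_i}_{j_i}\otimes X^{f_i}_{j'_i}Z^{-g_i}_{j'_i}\Bigr),
\end{equation*}
for arbitrary $k\in[1,n]$, arbitrary distinct system indices $j_1,\ldots,j_k$, and arbitrary exponents $f_i,g_i\in\{0,\ldots,\DD-1\}$.

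Next, because the ancillas are independent and each is prepared in the known state $\xi$, this expectation factorizes as
\begin{equation*}
\tr\Bigl(\rho\,\bigotimes_{i=1}^k X^{f_i}_{j_i}Z^{g_i}_{j_i}\Bigr)\cdot\prod_{i=1}^k \tr\bigl(X^{f_i}Z^{-g_i}\xi\bigr).
\end{equation*}
The hypothesis $\tr(X^fZ^{-g}\xi)\neq 0$ for all $f,g$ lets me divide by the (nonzero and, since $\xi$ is known, classically computable) ancilla factor and recover the RDM element $\rho^{f_1g_1,\ldots,f_kg_k}_{j_1,\ldots,j_k}$ exactly. Taking $k=n$ and running over all exponent tuples yields every expansion coefficient of $\rho$ in the HW operator basis $\{\bigotimes_{j=1}^n X^{f_j}_jZ^{g_j}_j\}$, which is a Hilbert--Schmidt orthogonal basis for operators on the $n$-qudit Hilbert space, so $\rho$ is uniquely determined.

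There is no substantial obstacle here; the argument is structurally a word-for-word translation of the qubit proof, and I expect the whole proof to occupy only a few lines. The only points requiring minor care are the sign convention on the ancilla exponent (the ``$-g$'' in $X^fZ^{-g}$, dictated by the Bell-basis diagonalization above), and the observation that the case $f=g=0$ contributes the trivial factor $\tr(\xi)=1$ and therefore never conflicts with the nonvanishing hypothesis.
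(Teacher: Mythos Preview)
Your proposal is correct and follows essentially the same route as the paper: invoke the generalized Bell basis as the common eigenbasis of $X^fZ^g\otimes X^fZ^{-g}$, factor the resulting $2k$-body correlator into a system piece times $\prod_i \tr(X^{f_i}Z^{-g_i}\xi)$, and divide through by the nonvanishing ancilla factors to recover every HW expansion coefficient of $\rho$. The only minor addition you make beyond the paper's argument is the explicit remark that the HW operators form a Hilbert--Schmidt orthogonal basis (so the recovered coefficients determine $\rho$ uniquely) and the observation about the $f=g=0$ case; both are harmless clarifications.
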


\begin{proof}
Since the generalized Bell basis is a common eigenbasis of $ X^f Z^g\otimes X^{f} Z^{-g}$,  for infinite number measurement repetitions, for any $k\in[1,n]$,   $f,g=0,\ldots,\DD-1$ and $j_i\in[1,n]$ ($j_i\neq j_{i'}$ for $i\neq i'$, $i,i'=1,\ldots,k$), using the above procedure we can calculate the $2k$-body correlation function $\tr\big(X^{f_1}_{j_1} Z^{g_1}_{j_1}\otimes  X^{f_1}_{j'_1} Z^{-g_1}_{j'_1} \cdots X^{f_k}_{j_k} Z^{g_k}_{j_k}\otimes  X^{f_k}_{j'_k} Z^{-g_k}_{j'_k}\rho\otimes\xi^{\otimes n}\big)$ which equals to $\tr\big(X^{f_1}_{j_1} Z^{g_1}_{j_1} \cdots X^{f_k}_{j_k} Z^{g_k}_{j_k}\rho\big)\prod_{i=1}^k\tr\big(X^{f_i}_{j'_i} Z^{-g_i}_{j'_i}\xi\big)$.  Since we assumed $\xi$ is known and $\tr\big(X^{f_i}_{j'_i} Z^{-g_i}_{j'_i}\xi\big)\neq0$  for $f_i,g_i=0,\ldots,\DD-1$, we can write
\begin{align}
&\rho^{f_1,h_1\ldots,f_k,h_k}_{j_1,\ldots, j_k} =\tr\big(X^{f_1}_{j_1} Z^{g_1}_{j_1} \cdots X^{f_k}_{j_k} Z^{g_k}_{j_k}\rho\big)\\[2pt]&
=\frac{\tr\big(X^{f_1}_{j_1} Z^{g_1}_{j_1}\otimes  X^{f_1}_{j'_1} Z^{-g_1}_{j'_1} \cdots X^{f_k}_{j_k} Z^{g_k}_{j_k}\otimes  X^{f_k}_{j'_k} Z^{-g_k}_{j'_k}\rho\otimes\xi^{\otimes n}\big)}{\prod_{i=1}^k\tr\big(X^{f_i}_{j'_i} Z^{-g_i}_{j'_i}\xi\big)}\,.
\end{align}
This completes the proof.
\end{proof}

The SIC POVM on a qudit, a $\DD$-dimensional Hilbert space, is a collection of $\DD^2$ POVM elements $E_i=\frac1{\DD}\ket{\psi_i}\bra{\psi_i}$, $i=1,\ldots,\DD^2$, such that 
\begin{align}
\vert\braket{\psi_i}{\psi_j}\vert^2=\frac{\DD\delta_{i,j}+1}{\DD+1}\,. 
\end{align}
We now show that, following Neumark's theorem it can be implemented  by attaching an ancilla qudit is a state $\xi$ (to be determined later) and measuring the two qudits in a generalized Bell-basis.

Consider the following generalization of the Bell-basis to qudits. We define, 
\begin{align}
\ket{\Phi_{h\ell}} = \openone\otimes X^h Z^\ell\; \bigg(\frac{1}{\sqrt \DD} \sum_{\dd=0}^{\DD-1}\, \ket{d}\otimes \ket{d}\bigg)=X^h Z^\ell \otimes\openone \; \bigg(\frac{1}{\sqrt \DD} \sum_{d=0}^{\DD-1}\, \ket{d}\otimes \ket{d}\bigg)\,,
\end{align}
where $X$ and $Z$ are the HW shift and phase (clock) operators, respectively,
\begin{align}
 X\ket{\dd} = \ket{\dd\oplus 1}\,,\quad Z\ket{\dd} = e^{i2\pi \dd/\DD}\ket{\dd} \,,
\end{align}
where $\dd =0,\ldots, \DD-1$ and $\oplus$ is addition modulo $d$. Following the HW commutation relation,
\begin{align}
 XZ =  e^{i2\pi /\DD}ZX \,,
\end{align}
we obtain,
\begin{align}
\braket{\Phi_{h\ell}}{\Phi_{h'\ell'}} &= \bra{\Phi_{00}}\openone\otimes ( X^{h}Z^{\ell})^\dagger X^{h'} Z^{\ell'}\ket{\Phi_{00}}=\bra{\Phi_{00}}\openone\otimes Z^{\DD-\ell} X^{\DD-h} X^{h'} Z^{\ell'}\ket{\Phi_{00}}\\&=e^{i2\pi (h-h')\ell'/\DD}\bra{\Phi_{00}}\openone\otimes Z^{\ell'-\ell} X^{h-h'}\ket{\Phi_{00}}=\delta_{\ell,\ell'}\,\delta_{h,h'}\,,
\end{align}
where the $\delta_{\ell,\ell'}=1$ if $\ell\ominus\ell'=0$ and $\delta_{\ell,\ell'}=0$ otherwise (and similarly for  $\delta_{h,h'}$).
therefore, the set $\{\ket{\Phi_{h\ell}}: h=0,\ldots, \DD-1; \ell=0,\ldots, \DD-1\}$ is a set of orthonormal basis (generalized Bell basis) on two qudits.

Following the same calculation done for two-qubit Bell measurement, it is straight forward to check that measuring the generalized Bell basis basis on a two-qudit state $\rho\otimes\xi$, defines an effective measurement  on the first qudit with $d^2$ POVM elements
\begin{align}
E_{\Phi_{h\ell}} &= \frac1{d} X^{h}Z^{\ell}\xi Z^{\DD-\ell}X^{\DD-h}\,,\;\; h,\ell=0,\ldots, \DD-1.
\end{align}
These equations define the HW group covariant SIC POVM on a qudit, when $\xi$ is its fiducial state. Finding fiducial states for HW covariant SIC POVM is an open question, though known results exist in certain dimensions.

\end{document}